\def\delequal{\mathrel{\ensurestackMath{\stackon[1pt]{=}{\scriptstyle\Delta}}}}
\def\BibTeX{{\rm B\kern-.05em{\sc i\kern-.025em b}\kern-.08em
    T\kern-.1667em\lower.7ex\hbox{E}\kern-.125emX}}
\newtheorem{theorem}{Theorem}
\newcommand{\diag}{\mathop{\mathrm{diag}}}
\begin{document}

\title{Joint Power Allocation and Network Slicing in an Open RAN System \vspace{-.1cm}
}
%
%\author{\IEEEauthorblockN{1\textsuperscript{st} Mojdeh Karbalaee Motalleb}
%\IEEEauthorblockA{\textit{Electrical and Computer Engineering} \\
%\textit{Tehran University}\\
%Tehran, Iran \\
%mojdeh.karbalaee@ut.ac.ir}
%\and
%\IEEEauthorblockN{2\textsuperscript{nd} Vahid Shah-Mansouri}
%\IEEEauthorblockA{\textit{Electrical and Computer Engineering} \\
%\textit{Tehran University}\\
%Tehran, Iran \\
%vmansouri@ut.ac.ir}
%\and
%\IEEEauthorblockN{3\textsuperscript{rd} Salar Nouri Naghadeh}
%\IEEEauthorblockA{\textit{Electrical and Computer Engineering} \\
%\textit{Tehran University}\\
%Tehran, Iran \\
%salar.nouri@ut.ac.ir}
%}
  \author{
    \IEEEauthorblockN{Mojdeh Karbalaee Motalleb, Vahid Shah-Mansouri, Salar Nouri Naghadeh}
    \IEEEauthorblockA{School of ECE, College of Engineering, University of Tehran, Iran \\
    Email: \{mojdeh.karbalaee, vmansouri, salar.nouri\}@ut.ac.ir,
    \vspace{-.2cm}
  }
  }

\maketitle

\begin{abstract}
Open radio access network (ORAN) alliance which has been formed recently establishes a flexible,  open, and smart radio access network (RAN) by combing the ideas from xRAN and cloud RAN (C-RAN). ORAN divides the functions of the RAN into three parts, namely remote unit (RU), distributed unit (DU), and central unit (CU). While RU contains lower PHY functions, DU contains higher PHY, MAC, and RLC and CU contains RRC, PDCP, and SDAP. CU and DU are implemented as virtual network functions (VNFs) running on a cloud environment. Interface between RU, CU, and DU are open standard interfaces. Network slicing as a new concept in 5G systems is used to share the network resources between various services while the operation of one service does not affect another service. In this paper, we study the problem of RAN network slicing in an ORAN system.  We formulate the problem of wireless link scheduling, mapping the slices to the services, and mapping the physical data centers resource to slices. The objective is to jointly maximize the energy efficiency and minimize power consumption of RUs and the cost of physical resources in a downlink channel. The problem is formulated as a mixed-integer optimization problem that can be decomposed into two independent sub-problems. Heuristic algorithms are proposed for each of the sub-problems.
\end{abstract}

\begin{IEEEkeywords}
Open RAN (ORAN), network slicing, energy efficiency.
\end{IEEEkeywords}

\section{Introduction}
Open RAN (ORAN), as the integration and expansion of C-RAN and xRAN, is expected to be a key technology in 5G networks to extensively enhance the RAN performance. The core idea of C-RAN is to split the radio remote head (RRU) from the baseband unit (BBU). Several BBUs operating on a cloud server will create a BBU-Pool, providing unified baseband signal processing with powerful computing capabilities \cite{cran1,frdl,simeone2016cloud,motalleb2017optimal}. On the other hand, xRAN technology has three fundamental features. The control plane is decoupled from user plane. Besides, a modular eNB software stack is built to operate on common-off-the-shelf (COTS) hardware. Moreover, open north-bound and south-bound interfaces are introduced \cite{xran}.

ORAN virtualizes the elements of the radio access networks, separate them and define appropriate open interfaces for connection of these elements. Moreover, ORAN uses machine learning techniques to develop smarter RAN layers in its architecture. In an innovative ORAN system, the programmable RAN software is decoupled from hardware \cite{oran1}. Open interface is one of the most crucial properties for ORAN to enable mobile network operators (MNOs) to define their own services.
The concept of software defined network (SDN), which is the separation of control plane from user plane, is deployed in an intelligent ORAN architecture. Moreover, this separation promotes RRM to use non-realtime (RT) and near-realtime (NRT) RAN Intelligent Controller (RIC).
In the ORAN architecture, distributed unit (DU) is a logical node having RLC/MAC/High-PHY layer.
Moreover, central unit (CU) is a logical node having RRC, SDAP and PDCP. Also, radio unit (RU) is a logical node having LOW-PHY layer and
RF processing \cite{oranpaper}.
ORAN introduces interfaces such as open fronthaul interface that connects DU and RU (i.e., E2 interface), and an A1 interface between orchestration/NMS layer containing the non-real-time RIC (RIC non-RT) function and the eNB/gNB containing the near-real-time RIC (RIC near-RT) function.

To evolve servicing in 5G networks, separation of elements of software and hardware of network is employed and referred to as network functions virtualization (NFV). Virtual network function (VNF) are functional blocks of the system. 5G networks are expected to host several services with different requirements simultaneously. Network slicing is considered as a solution for such demand. A network slice is a logical end-to-end network which provides service with specific requirements. Multiple network slices run on the same network infrastructure which are managed and operate independently \cite{lee2018dynamic}. Slicing can be defined for the RAN, for the core network or both of these them \cite{ns1}.

Based on their type of request, UEs are classified into various services, where UEs in the same service have similar service requests. In addition, each service is mapped to one or more slices based on the resources of the slices. Each VNF of the ORAN is mapped to one or some virtual machines (VMs) in the data center. For simplicity of analysis and without loss of generality, we assume a VNF is mapped to one VM and requires specific physical resources including storage, memory, and processing \cite{frdl,luong2018novel,luong2018novel1}.

%The compression of messages passed through the fronthaul link due to the limited capacity of these links is considered in \cite{simeone2016cloud,1111}.
In \cite{oranT},  the evolution of RAN in the  5G network is described. The architecture of C-RAN and X-RAN is expressed and development of the ORAN from C-RAN and X-RAN is explained. Also requirements of the network in terms of capacity and latency is taken into account.
In \cite{oran12},  the ORAN architecture is studied. The eight work groups is discussed including: focus areas of ORAN alliance which contains use-cases, overall architecture,  open interfaces, and  the virtualization and modularization of hardware and software. To the best of our knowledge, there is not work in the literature to model ORAN systems. 

%We shortly review related works in the area of C-RAN and xRAN. In \cite{motalleb2017optimal}, power allocation in C-RAN architecture, with limited fronthaul capacity in clustered network is studied. In \cite{lee2018dynamic}, dynamic network slicing is considered in heterogeneous CRAN (H-CRAN) to maximize the weighted sum-rate. Also, two level network slicing (upper and lower level) is done. The admission of UEs is maximized using the heuristic algorithms. In addition, the allocation of BBU capacity to UEs is considered. In \cite{frdl,luong2018novel,luong2018novel1,guo2016exploiting} minimization of cost obtained by power consumption, also cloud processing and limited power for each RU is considered. Furthermore, the processing delay of each VM and wireless transmission delay are taken into account. In \cite{frdl}, the processing and transmission delay is modeled as a M/M/1 queue. In \cite{guo2016exploiting}, M/D/1 queue is used to modeled the transmission delay.

\begin{figure}
  \centering
    \includegraphics[scale=0.55]{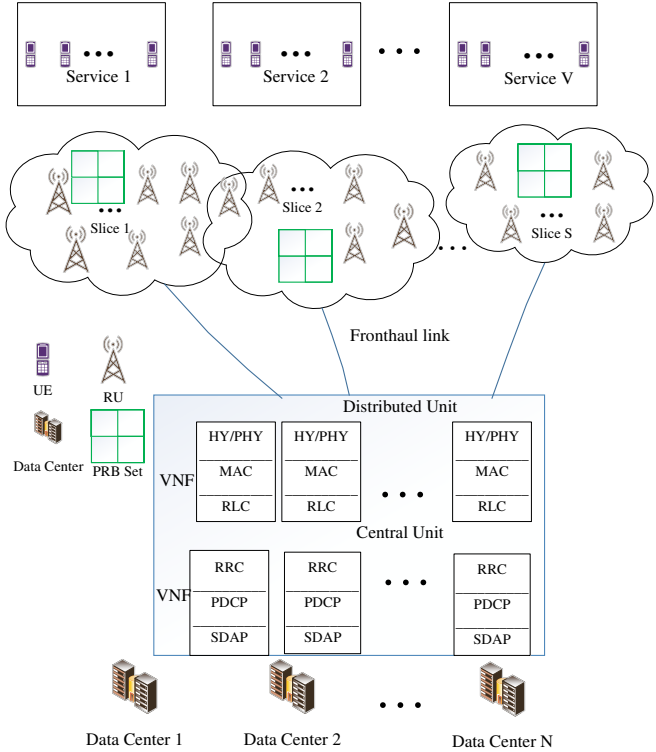}
  \caption{Network sliced ORAN system}
  \label{fig:c11}
\end{figure}

In this paper, as  depicted in Figure \ref{fig:c11}, the downlink of the ORAN system is studied. The RAN is divided into three layers including RU, DU, and CU with open interfaces. DU and CU are running on general purpose data centers. UEs are divided to different services according to their requirements. RAN resources are decoupled to slices to provide requirements of services. Optimal power allocation and joint mapping of slices to services are applied. In addition, mapping slices to physical resources is taken to account. The contributions of the paper are as follows:
\begin{itemize}
\item In this paper, joint network slicing and resource allocation is considered in an ORAN system.
\item We formulate the problem of assigning UEs to services, services to slices, and physical wireless and data center resources to the slices as an optimization problem.
\item The problem is decomposed into two independent sub-problems.
\item Novel heuristic algorithms are applied for these sub-problems to efficiently obtain the solution.
\end{itemize}

The rest of the paper is organized as follows. In Section \ref{systemmodel}, system model which contains obtaining achievable rates, processing and transmission delay, and physical data center resources is introduced. Then, problem statement is explained and decomposed into two independent sub-problems.
In Section \ref{proposedmethod}, heuristic algorithms for the sub-problems is presented. In Section \ref{simul}, numerical results are provided to investigate the performance of the algorithms.

\section{System Model and Problem Formulation}\label{systemmodel}

In this section, first, we  present the system model. Then, we obtain achievable rates and delays for the downlink (DL) of the ORAN system. Afterward, we discuss about assignment of physical data center resources.
Finally, the main problem is expressed.
\subsection{System Model}
Suppose there are $S$ slices Serving $V$ services. Each Service $v\in \{1,2,...,V \} $ consists of $U_v$
single-antenna user equipments (UEs) that require certain service. Each slice $s \in \{1,2,...,S \}$ consists of $R_s$ RUs and $K_s$ physical resource blocks (PRBs), one DU and one CU that contains VNFs.
Slices can have shared resources. All RUs in a slice, that is mapped to a service, transmit signals cooperatively to all the UEs in a specific service \cite{motalleb2017optimal,mimoCran}. Each RU $r \in \{1,2,...,R \}$ is mapped to a DU via an optical fiber link with limited fronthaul capacity.
There are two processing layers one in the DU and one in the CU of ORAN system, each represented with a VNF. The lower layer (i.e., DU) consists of high-PHY, MAC, and RLC, and the upper layer (i.e., CU) consists of RRC, PDCP and SDAP. Assume we have $M_1$ VNFs in the DU layer and $M_2$ VNFs in the CU layer for processing data.
Each VNF in both layers belongs to one or more slices. So, in the $s^{th}$ slice, there are $M_{s,1}$ VNFs in the DU layer and $M_{s,2}$ VNFs in the CU layer. The VNFs in the DU and CU layers have the computational capacity that is  equal to $\mu_1$ and $\mu_2$, respectively.
Also, RUs and PRBs can serve more than one slices.
\subsection{The Achievable Rate}
The achievable data rate for the $i^{th}$ UE in the $v^{th}$ service can be written as
\begin{equation}\label{eq1}
\mathcal{R}_{u(v,i)} = B \log_2({1+ \rho_{u(v,i)}}),
\end{equation}
where $B$ is the bandwidth of system and $\rho_{u(v,i)}$ is the SNR of $i^{th}$ UE in $v^{th}$ service which is obtained from
\begin{equation}\label{eq2}
\rho_{u(v,i)} =  \frac{p_{u(v,i)}\sum_{s=1}^{S}|\bold{h}_{R_s,u(v,i)}^H \bold{w}_{R_s,u(v,i)}|^2 a_{v,s}}{BN_0 + I_{u(v,i)}},
\end{equation}
where $p_{u(v,i)}$ represents the transmission power allocated by RUs to $i^{th}$ UE in $v^{th}$ service, and
$\bold{h}_{R_s,u(v,i)} \in \mathbb{C}^{{R}_s}$ is the vector of channel gain of a wireless link from RUs in the $s^{th}$ slice to the $i^{th}$ UE in $v^{th}$ service. In addition, $\bold{w}_{R_s,u(v,i)} \in \mathbb{C}^{{R}_s}$ depicts the  transmit beamforming vector from RUs in the $s^{th}$ slice to the $i^{th}$ UE in $v^{th}$ service. Moreover, $BN_0$ denotes the power of Gaussian additive noise, and $I_{u(v,i)}$ is the power of interfering signals. Moreover, $a_{v,s} \in \{0,1\}$ is a binary variable that illustrates whether slice $s$ is mapped to service $v$ or not. If $a_{v,s} =1$ then, $v^{th}$ service is mapped to $s^{th}$ slice; otherwise, it is not mapped.
\newline
To obtain SNR as formulated in \eqref{eq2}, let $\bold{y}_{U_v}\in \mathbb{C}^{U_v} $ be the received signal's vector of all users in $v^{th}$ service
\begin{equation}\label{eq3}
\textstyle \bold{y}_{U_v} = \sum_{s = 1}^{S}\sum_{k=1}^{K_s} \boldsymbol{H}^H_{\mathcal{R}_s,\mathcal{U}_v} \
\mathfrak{y}_{R_s}\zeta_{U_v,k,s} a_{v,s}+ \boldsymbol{z}_{\mathcal{U}_v},
\end{equation}
where $\mathfrak{y}_{R_s} = \boldsymbol{W}_{\mathcal{R}_s,\mathcal{U}_v}\boldsymbol{P}_{U_v}^{\frac{1}{2}}\boldsymbol{x}_{\mathcal{U}_v}+ \boldsymbol{q}_{\mathcal{R}_s}$
and $\boldsymbol{x}_{ \mathcal{U}_v} = [x_{ u_{(v,1)}},...,x_{ u_{(v,\mathcal{U}_v)}}]^T \in \mathbb{C}^{{R}_s } $ depicts the transmitted symbol vector of UEs in $v^{th}$ set of service,  $\boldsymbol{z}_{U_v}$ is the additive Gaussian noise $\boldsymbol{z_{U_v}} \backsim \mathcal{N}(0,N_0\boldsymbol{I}_{{U}_v})$ and $N_0$ is the noise power.
In addition, $\boldsymbol{q}_{R_s} \in \mathbb{C}^{{R}_s }  $ indicates the quantization noise, which is made from signal compression in DU.
Besides, $\boldsymbol{P}_{U_v} = \diag{(p_{u_{(v,1)}}, ..., p_{u_{(v,\mathcal{U}_v)}})}$.
\newline
Furthermore, $\zeta_{k,s}^{U_v} \delequal \{\zeta_{k,s}^{u(v,1)},\zeta_{k,s}^{u(v,2)},...,\zeta_{k,s}^{u(v,N_{U_v})}\}$,
$\zeta_{k,s}^{u(v,i)} \in \{0,1\}$ is a binary parameter, which demonstrates whether $i^{th}$ UE in $v^{th}$ service can transmit its signals through $k^{th}$ PRB and also this PRB belongs to $s^{th}$ slice or not.
$\boldsymbol{H}_{\mathcal{R}_s,\mathcal{U}_v}=\left[\boldsymbol{h}_{\mathcal{R}_s,u_{(v,1)}},\ldots,\boldsymbol{h}_{\mathcal{R}_s,v_{(v,\mathcal{U}_v)}}\right]^T  \in \mathbb{C}^{{R}_s\times {U}_v }$
shows the channel matrix between RU set $\mathcal{R}_s$ to UE set
$\mathcal{U}_v$, besides.
%The channel vector from the RU of  $s^{th}$ slice to the $i^{th}$ UE in the $v^{th}$ service $\boldsymbol{h}_{\mathcal{R}_s,u_{(v,i)}}\in \mathbb{C}^{{R}_s}$ is modeled as below
%\begin{equation}
%\boldsymbol{h}_{\mathcal{R}_s,u_{(s,i)}} = \boldsymbol{\beta}^\frac{1}{2}_{\mathcal{R}_s,u_{(v,i)}} \boldsymbol{g}_{\mathcal{R}_s,u_{(v,i)}},
%\end{equation}
%where $\boldsymbol{g}_{\mathcal{R}_s,u_{(v,i)}} \backsim \mathcal{N}(0,N_0\boldsymbol{I}_{\mathcal{U}_v})$ indicates the fast fading and flat fading channel vector and $\boldsymbol{\beta}_{\mathcal{R}_s,u_{(v,i)}}=\text{diag}(b_{r_{(s,1),u_{(v,i)}}},\ldots,b_{r_{(s,\mathcal{R}_s),u_{(v,i)}}})$
%represents the large scale fading matrix.
What's more, it is assumed we have perfect channel state information (CSI).\newline
Moreover, $\boldsymbol{W}_{\mathcal{R}_s,\mathcal{U}_v} = [\boldsymbol{w}_{\mathcal{R}_s,u(v,1)},...,\boldsymbol{w}_{\mathcal{R}_s,u(v,U_v)}] \in \mathbb{C}^{{R}_s\times U_v} $ is the zero forcing beamforming vector to minimize the interference which is indicated as below
\begin{equation}
\textstyle \boldsymbol{W}_{\mathcal{R}_s,\mathcal{U}_v} = \boldsymbol{H}_{\mathcal{R}_s,\mathcal{U}_v}(\boldsymbol{H}_{\mathcal{R}_s,\mathcal{U}_v}^H \boldsymbol{H}_{\mathcal{R}_s,\mathcal{U}_v})^{-1}.
\end{equation}
Hence, the interference power of $i^{th}$ UE in $v^{th}$ service can be represented as follow
\begin{equation}
\begin{split}
 I_{u_{(v,i)}} &=
 \underbrace{\sum_{s=1}^{S}\sum_{n=1}^{S}\sum_{\substack{l=1 \\ l\neq i}}^{{U}_v} \gamma_{1}  p_{u_{(v,l)}}a_{v,s}\zeta_{u_(v,i),n,s}\zeta_{u_(v,l),n,s}}_{\text{(intra-service interference)}}\\
&+ \underbrace{\sum_{\substack{y=1 \\ l\neq v}}^{V}\sum_{s=1}^{S}\sum_{n=1}^{S}\sum_{l=1}^{{U}_y} \gamma_{2}  p_{u_{(y,l)}}a_{y,s} \zeta_{u_(v,i),n,s}\zeta_{u_(y,l),n,s}}_{\text{(inter-service interference)}}\\
&+\underbrace{ \sum_{s=1}^{S} \sum_{j=1}^{{R}_s} {\sigma_q}_{r_{(s,j)}}^2 |\boldsymbol{h}_{r_{(s,j)}, u_{(v,i)}}|^2 a_{v,s}}_{\text{(quantization noise interference)}},
\end{split}
\end{equation}
where $\gamma_{1} =|\boldsymbol{h}_{\mathcal{R}_s, u_{(v,i)}}^H \boldsymbol{w}_{\mathcal{R}_{s},u_{(v,l)}}|^2$
and $\gamma_{2} =|\boldsymbol{h}_{\mathcal{R}_s, u_{(v,i)}}^H \boldsymbol{w}_{\mathcal{R}_{s},u_{(y,l)}}|^2$. Moreover,
${\sigma_q}_{r_{(s,j)}}$ is the variance of quantization noise of $j^{th}$ RU in $s^{th}$ slice.
Interference signal for each UE is coming from UEs using the same PRB.
If we replace $p_{u_{(v,l)}}$ and $p_{u_{(y,l)}}$ by $P_{max}$, an upper bound $\bar{I}_{u_{(v,i)}}$ is obtained for $I_{u_{(v,i)}}$. Therefore, $\bar{\mathcal{R}}_{u_{(v,i)}} \forall v , \forall i$ is derived by using $\bar{I}_{u_{(v,i)}}$ instead of $I_{u_{(v,i)}}$ in  \eqref{eq1} and \eqref{eq2}.\newline
Let $\bar{p}_{r_{(s,j)}}$ denote the power of transmitted signal from the $j^{th}$ RU in $s^{th}$ slice.
From \eqref{eq3}, we have,
\begin{equation}
\bar{p}_{r_{(s,j)}} = \sum_{v=1}^{V}\boldsymbol{w}_{r_{(s,j)},\mathcal{U}_{v}} \boldsymbol{P}_{\mathcal{U}_v}^{\frac{1}{2}} \boldsymbol{P}_{\mathcal{U}_v}^{H \frac{1}{2}}   \boldsymbol{w}_{r_{(s,j)},\mathcal{U}_{v}}^H a_{v,s} + \sigma_{q_{r(s,j)}}^2.
\end{equation}
Nevertheless, the rate of users on the fronthual link between DU and the $j^{th}$ RU in $s^{th}$ slice is formulated as \cite{simeone2016cloud, 1111}
\begin{equation}
C_{R_{(s,j)}} = \log{(1+\sum_{v=1}^{V}\frac{w_{r_{(s,j)},\mathcal{D}_{s}} \boldsymbol{P}_{\mathcal{U}_v}^{\frac{1}{2}} \boldsymbol{P}_{\mathcal{U}_v}^{H \frac{1}{2}}   w_{r_{(s,j)},\mathcal{U}_{v}}^H a_{v,s}}{ \sigma_{q_{r(s,j)}}^2})},
\end{equation}
where, $a_{v,s}$ is a binary variable denotes whether the slice $s$ is mapped to service $v$ or not .

\subsection{Mean Delay}
Assume the packet arrival of UEs follows  a Poisson process with arrival rate $\lambda_{u(v,i)}$ for the $i^{th}$ UE of the $v^{th}$ service.
Therefore, the mean arrival data rate of UEs mapped to the $s^{th}$ slice in the CU layer is $\alpha_{s_1} = \sum_{v=1}^{V}\sum_{u=2}^{U_v}a_{v,s}\lambda_{u(v,i)}$, where $a_{v,s}$ is a binary variable which indicates whether the $v^{th}$ service is mapped to the $s^{th}$ slice or not.
Furthermore, the mean arrival data rate of the DU layer is approximately equal to the mean arrival data rate of the first layer $\alpha_{s} =\alpha_{s_1} \approx \alpha_{s_2}$ since, by using Burke’s Theorem, the mean arrival data rate of the second layer which is processed in the first layer is still Poisson with rate $\alpha_{s}$.
It is assumed that there are load balancers in each layer for each slice to divide the incoming traffic to VNFs equally \cite{frdl,luong2018novel,luong2018novel1}.
Suppose the baseband processing of each VNF is depicted as an M/M/1 processing queue.
Each packet is processed by one of the VNFs of a slice. So, the mean delay of the $s^{th}$ slice in the first and the second layer, modeled as M/M/1 queue, is formulated as follow, respectively
\begin{equation}
\begin{split}
d_{s_1} &= \frac{1}{\mu_1 - \alpha_{s}/{M_{s,1}}},\\
d_{s_2} &= \frac{1}{\mu_2 - \alpha_{s}/{M_{s,2}}}.
\end{split}
\end{equation}
where $1/\mu_1$ and $1/\mu_2$ are the mean service time of the first and the second layers respectively.
Besides, $\alpha_{s}$ is the  arrival rate which is divided
by load balancer before arriving to the VNFs. The  arrival rate of each VNF in each layer of the slice $s$ is $\alpha_{s}/{M_{s,i}}$ $ i \in \{1,2\}$.
In addition, $d_{s_{tr}}$ is the transmission delay for $s^{th}$ slice on the  wireless link. The arrival data rate of wireless link
 is equal to the arrival data rate of load balancers for each slice \cite{frdl}.
Moreover, it is assumed that the service time of transmission queue for each slice $s$ has
 an exponential distribution with mean $1/(R_{{tot}_s})$ and can be modeled as a M/M/1 queue \cite{frdl,luong2018novel,luong2018novel1,guo2016exploiting}. Therefore,
the mean delay of the transmission layer is
\begin{equation}
 d_{s_{tr}} = \frac{1}{R_{{tot}_s} - \alpha_{s}};
\end{equation}
where, $R_{{tot}_s} =  \sum_{v=1}^{V}\sum_{u=2}^{U_v}a_{v,s}R_{u(v,i)}$ is the total achievable rate of each slice that is mapped to specific service.
Mean delay of each slice is
\begin{equation}
D_{s} = d_{s_1} + d_{s_2} + d_{s_{tr}} \forall s.
\end{equation}
\subsection{Physical Data Center Resource}
Each VNF requires
physical resources that contain memory, storage and CPU.
Let the required resources for VNF $f$ in slice $s$ is represented by a tuple as
\begin{equation}
\bar{\Omega}_{s}^f = \{\Omega_{M,{s}}^f, \Omega_{S,{s}}^f, \Omega_{C,{s}}^f \},
\end{equation}
where $\bar{\Omega}_{s}^f\in \mathbb{C}^{3}$ and $\Omega_{M,{s}}^f, \Omega_{S,{s}}^f, \Omega_{C,{s}}^f$ indicate the amount of required memory, storage, and CPU, respectively.
Moreover, the total amount of required memory, storage and CPU of all VNFs of a slice is defined as
\begin{equation}
\textstyle \bar{\Omega}_{\mathfrak{z},s}^{tot} = \sum_{f=1}^{M_{s_1} + M_{s_2}}\bar{\Omega}_{\mathfrak{z},s}^f \;\; \mathfrak{z} \in \{M, S, C\}.
\end{equation}
Also, there are $D_c$ data centers (DC), serving the VNFs. Each DC contains several servers that supply VNF requirements.
The amount of memory, storage and CPU is denoted by $\tau_{M_{j}}, \tau_{S_{j}}$and $\tau_{C_{j}} $ for the $j^{th}$ DC, respectively
\begin{equation*}
\tau_j = \{\tau_{M_{j}}, \tau_{S_{j}}, \tau_{C_{j}} \},
\end{equation*}
In this system model, the assignment of physical DC resources to VNFs is considered. Let $y_{s,d}$ be a binary variable indicating whether the $d^{th}$ DC is connected to the VNFs of $s^{th}$ slice or not.
\subsection{Problem Statement}
An important criterion to measure the optimality of a system is energy efficiency represented as the sum-rate to sum-power
\begin{equation}
\textstyle \eta(\boldsymbol{P},\boldsymbol{A}) := \frac{\sum\limits_{v=1}^{V} \sum\limits_{k=1}^{{U}_v}\mathcal{R}_{u_{(v,k)}} }{\sum\limits_{s=1}^{S} \sum\limits_{i=1}^{{R}_s}\bar{p}_{r_{(s,i)}}} = \frac{\mathfrak{R}_{tot}(\boldsymbol{P},\boldsymbol{A})}{P_r^{{tot}}(\boldsymbol{P},\boldsymbol{A})},
\end{equation}
where, $P_r^{tot}(\boldsymbol{P},\boldsymbol{A}) = \sum\limits_{s=1}^{S}\sum\limits_{i=1}^{{R}_s}\bar{p}_{r_{(s,i)}}$ is the total power consumption of all RUs in all slices. Also, $\mathfrak{R}_{tot}(\boldsymbol{P},\boldsymbol{A}) = \sum\limits_{v=1}^{V} \sum\limits_{k=1}^{{U}_v}\mathcal{R}_{u_{(v,k)}} $ is the toal rates of all UEs applied for all types of services.
Assume the power consumption of baseband processing at each DC $d$ that is connected to VNFs of a slice $s$ is depicted as
$\phi_{s,d}$. So the total power of the system for all active DCs that are connected to slices can be represented as
\begin{equation*}
\textstyle \phi_{tot} = \sum_{s=1}^{S}\sum_{d=1}^{D_c}y_{s,d}\phi_{s,d}.
\end{equation*}
Also, a cost function for the placement of VNFs into DCs is defined as
\begin{equation}\label{eqpsi}
\textstyle  \psi_{tot} = \phi_{tot} - \nu \sum_{d=1}^{D_c}\sum_{v=1}^{V}y_{s,d}a_{v,s}
\end{equation}
where, $\nu$ is a design variable to value between the first term of \eqref{eqpsi} which is the total power consumption of physical resources and the second term that is shown the amount of admitted slices to have physical resources.
Our goal is to maximize sum-rate and minimize sum-power (the total power of all RUs and the total power consumption of baseband processing at all DCs) simultaneously, with the presence of constraints which is written as follow,
\begin{subequations}
\begin{alignat}{4}
\max\limits_{\boldsymbol{P}, \boldsymbol{A}, \boldsymbol{Y} }   \quad &   \eta(\boldsymbol{P},\boldsymbol{A})+\frac{1}{\psi_{tot}(\boldsymbol{Y})} \\
\text{subject to} \quad  & \bar{p}_{r_{(s,i)}} \leq P_{max} \quad \forall s, \forall i,
 \label{c11} \\
&p_{u_{(v,k)}}  \geq 0  \quad \forall v, \forall k,\label{c12} \\
&\mathcal{R}_{u_{(v,k)}} \geq  \mathcal{R}_{u_{(v,k)}}^{min} \quad \forall v, \forall k,\label{c13} \\
&C_{r_{(s,i)}} \leq C_{r_{(s,i)}}^{max} \quad \forall s, \forall i, \label{c14}\\
&D_{s} \leq D_{s}^{max} \quad \forall s,\label{c15} \\
& \textstyle  \sum_{s=1}^{S}a_{v,s} \geq 1 \quad \forall s, \label{c21} \\
& \textstyle  \sum_{d=1}^{D_c}\sum_{v=1}^{V}y_{s,d}a_{v,s} \geq 1\times\sum_{v=1}^{V}a_{v,s} \forall s,\label{c23} \\
 &\textstyle  \bar{\Omega}_{\mathfrak{z},s}^{tot} = \sum_{f=1}^{F_s}\bar{\Omega}_{\mathfrak{z},s}^f \leq  \sum_{d=1}^{D_c} y_{s,d} \tau_{\mathfrak{z}_d}  \forall s, \forall \mathfrak{z}\in \mathcal{E}; \label{c22}
\end{alignat}
\label{constraints}
\end{subequations}
where $\boldsymbol{P} =[p_{u(v,k)}] \:\: \forall v , \forall k $, is the matrix of power for UEs, $\boldsymbol{A} =[a_{v,s}] \:\: \forall v , \forall s $ denotes the binary variable for connecting slices to services and $\boldsymbol{Y} =[y_{s,d}]  \:\: \forall s ,  \forall d $ is a binary variable shown whether
the physical DC is mapped to a VNFs of a slice or not.
\eqref{c11}, and \eqref{c12}, indicate that the power of each RU do not exceed the maximum power, and the power of each UE is a positive integer value, respectively. Also \eqref{c13} shows that the rate of each UE is more than a threshold. \eqref{c14} and \eqref{c15} expressed the limited capacity of the fronthaul link, and the limited delay of receiving signal, respectively.
Furthermore, \eqref{c21}
ensures that each service is mapped to at least one slice.
Also, \eqref{c23}, guarantees that each slice (VNFs in two layers of slices) has been placed to one or more physical resources of DCs. Moreover, in \eqref{c22}  $\mathcal{E} = \{M,S,C\}$ and the constraint supports
that we have enough physical resources for VNFs of each slice.\newline
The optimization problem in \eqref{constraints} can be decomposed into two independent optimization problems A and B since the variables can be obtained independently and respectively. Firstly we need to solve problem A. After obtaining $\boldsymbol{P}$ and $ \boldsymbol{A}$, problem B can be solved by having the value of $ \boldsymbol{A}$. 
The problem A is as follow
\begin{subequations}
\begin{alignat}{4}
\max\limits_{\boldsymbol{P}, \boldsymbol{A} }   \quad &   \eta(\boldsymbol{P},\boldsymbol{A})\\
\text{subject to} \quad  & \bar{p}_{r_{(s,i)}} \leq P_{max} && \quad \forall s, \forall i,   \\
&p_{u_{(v,k)}}  \geq 0  &&\quad \forall v, \forall k, \\
&\mathcal{R}_{u_{(v,k)}} \geq  \mathcal{R}_{u_{(v,k)}}^{min} && \quad \forall v, \forall k, \\
&C_{r_{(s,i)}} \leq C_{r_{(s,i)}}^{max}  &&\quad \forall s, \forall i,\label{cc14} \\
&D_{s} \leq D_{s}^{max}  &&\quad \forall s, \label{cc15} \\
& \textstyle  \sum_{s=1}^{S}a_{v,s} \geq 1 &&\quad \forall s.
\end{alignat}
\label{constraints1}
\end{subequations}
In problem B, $ \boldsymbol{Y}$ is obtained. The problem B is
\begin{subequations}
\begin{alignat}{4}
\min\limits_{\boldsymbol{y} }   \quad &   \psi_{tot}(\boldsymbol{Y})\\
\text{s. t.} \quad & \textstyle \sum_{d=1}^{D_c}\sum_{v=1}^{V}y_{s,d}a_{v,s} \geq 1\times\sum_{v=1}^{V}a_{v,s} \forall s, \\
 &\textstyle  \bar{\Omega}_{\mathfrak{z},s}^{tot} = \sum_{f=1}^{F_s}\bar{\Omega}_{\mathfrak{z},s}^f \leq  \sum_{d=1}^{D_c} y_{s,d} \tau_{\mathfrak{z}_d}
\forall s, \forall \mathfrak{z}\in \mathcal{E};  \label{eqomega}
\end{alignat}
\label{constraints2}
\end{subequations}
\section{Heuristic Method}\label{proposedmethod}
In this subsection, a heuristic  method is proposed to solve the problem in \eqref{constraints1} which is non-convex and computationally hard. To make it tractable, we divide problem \eqref{constraints1} into
two different part that can be solved iteratively.
In the first part of sub-problem A, we obtain $\boldsymbol{A}$ by fixing $\boldsymbol{P} = P_{max}$ in \eqref{constraints1}. Also, we set $\eta = 0$. Afterward, by achieving $\boldsymbol{A}$, in the second part, we find $\boldsymbol{P}$ by using \eqref{constraints1}. This part of the problem can be approximated and converted to a convex problem, so the problem can be solved by convex methods. After solving $\boldsymbol{P}$, $\eta$ is updated. Then in the next iteration, with new $\boldsymbol{P}$
and $\eta$, the two parts of the problems are solved. We repeat this procedure until the algorithm  converges.
\subsection{First Part of Sub-Problem A}\label{firstsub}
Two different methods are applied to acquire $\boldsymbol{A}$.
The details of the heuristic algorithm are represented in algorithm \ref{alg}.
\begin{algorithm}
\caption{Mapping Slice to Service}\label{alg}
\begin{algorithmic}[1]
\State Sort services according to the number of UEs in it and their requirements in the descending order.
\State Sort slices according to the weighted linear combination of number of PRBs, RUs and VNFs in two layers and the capacity of their resources in the descending order.
\For {$i \gets 1$ to $S$}
\For {$j \gets 1$ to $V$}
\State Set $a_{i,j} = 1$
\State Obtain Parameters of Systems (power and rate of UEs, rate of fronhaul links, power of RUs)
\If {conditions \eqref{c11}, \eqref{c12}, \eqref{c13} and \eqref{c14} is not applied}
\State Set $a_{i,j} = 0$;
\Else
\State break from inner loop;
\EndIf
\State \textbf{end if}
\EndFor
\State \textbf{end for}
\EndFor
\State \textbf{end for}
\end{algorithmic}
\end{algorithm}
\subsection{Second Part of Sub-Problem A}\label{secondsub}
In this part, by assuming that $\boldsymbol{A}$ is fixed, the power of UEs in each service is achieved.
\begin{theorem}\label{t2}
The optimum energy efficiency is achieved if
\begin{equation}\label{q2}
%\begin{split}
\max \limits_{\boldsymbol{P}} (\mathfrak{R}_{tot}(\boldsymbol{P}) - \eta^* P_{r_{tot}}(\boldsymbol{P}))=
 \mathfrak{R}_{tot}(\boldsymbol{P}^*) - \eta^* P_{r_{tot}}(\boldsymbol{P}^*) =0.
%\end{split}
\end{equation}
\end{theorem}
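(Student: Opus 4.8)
The plan is to recognize \eqref{q2} as the optimality condition from Dinkelbach's nonlinear fractional programming framework and to prove it as an equivalence between maximizing the ratio $\eta(\boldsymbol{P}) = \mathfrak{R}_{tot}(\boldsymbol{P})/P_{r_{tot}}(\boldsymbol{P})$ and locating a root of the auxiliary function $F(\eta) \delequal \max_{\boldsymbol{P}}\bigl(\mathfrak{R}_{tot}(\boldsymbol{P}) - \eta\, P_{r_{tot}}(\boldsymbol{P})\bigr)$, where the maximization runs over the feasible set of problem A. The one structural fact I would record at the outset is that $P_{r_{tot}}(\boldsymbol{P}) > 0$ for every feasible $\boldsymbol{P}$, because each RU transmit power $\bar{p}_{r_{(s,i)}}$ contains the strictly positive quantization-noise term $\sigma_{q_{r(s,j)}}^2$; this strict positivity of the denominator is what lets me clear it without reversing inequalities.

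For the necessity (``only if'') direction I would let $\boldsymbol{P}^*$ attain the maximal energy efficiency $\eta^* = \mathfrak{R}_{tot}(\boldsymbol{P}^*)/P_{r_{tot}}(\boldsymbol{P}^*)$. Optimality means $\mathfrak{R}_{tot}(\boldsymbol{P})/P_{r_{tot}}(\boldsymbol{P}) \le \eta^*$ for every feasible $\boldsymbol{P}$; multiplying through by $P_{r_{tot}}(\boldsymbol{P}) > 0$ yields $\mathfrak{R}_{tot}(\boldsymbol{P}) - \eta^* P_{r_{tot}}(\boldsymbol{P}) \le 0$ for all feasible $\boldsymbol{P}$, with equality at $\boldsymbol{P}^*$. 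Hence the maximum of the auxiliary objective equals $0$ and is attained at $\boldsymbol{P}^*$, which is exactly \eqref{q2}. The sufficiency (``if'') direction reverses the algebra: assuming a feasible $\boldsymbol{P}^*$ with $\mathfrak{R}_{tot}(\boldsymbol{P}^*) - \eta^* P_{r_{tot}}(\boldsymbol{P}^*) = 0$ and $F(\eta^*)=0$, I would read off $\mathfrak{R}_{tot}(\boldsymbol{P}) - \eta^* P_{r_{tot}}(\boldsymbol{P}) \le 0$ for all feasible $\boldsymbol{P}$, divide by the positive denominator to get $\eta(\boldsymbol{P}) \le \eta^*$ everywhere, and note that $\boldsymbol{P}^*$ meets this bound with equality, so $\boldsymbol{P}^*$ is a maximizer and $\eta^*$ the optimal efficiency.

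I expect the genuine work to lie not in this short equivalence but in justifying the auxiliary function itself: showing $F(\eta)$ is well defined (a maximizer exists on the feasible set), that it is continuous and strictly decreasing in $\eta$, and therefore has a unique root $\eta^*$. The monotonicity is the step I would treat most carefully: since $P_{r_{tot}}(\boldsymbol{P}) > 0$, each map $\eta \mapsto \mathfrak{R}_{tot}(\boldsymbol{P}) - \eta\, P_{r_{tot}}(\boldsymbol{P})$ is strictly decreasing pointwise, and taking the maximum over a family of strictly decreasing functions preserves strict monotonicity, which I would verify by comparing the two maximizers at $\eta_1 < \eta_2$. This is precisely the property that guarantees the Dinkelbach iteration used in the algorithm of Section~\ref{secondsub} converges to the unique $\eta^*$ characterized by \eqref{q2}. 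Existence of the maximizer in turn rests on the feasible region being compact and the objective continuous, which I would argue from the SNR expression \eqref{eq2} together with the power and rate constraints \eqref{c11}--\eqref{cc15}.
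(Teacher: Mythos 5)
Your proof is correct and is the standard Dinkelbach fractional-programming argument: using the strict positivity of the denominator $P_{r_{tot}}(\boldsymbol{P})$ (which you rightly ground in the quantization-noise term $\sigma_{q_{r(s,j)}}^2$ of $\bar{p}_{r_{(s,j)}}$) to pass between the ratio bound $\eta(\boldsymbol{P})\le\eta^*$ and the condition $\mathfrak{R}_{tot}(\boldsymbol{P})-\eta^* P_{r_{tot}}(\boldsymbol{P})\le 0$, in both directions. The paper gives no proof of its own --- it merely cites an external appendix --- and that citation is to precisely this classical equivalence, so your self-contained argument (together with the existence and monotonicity observations on $F(\eta)$ that justify the iterative update of $\eta$ in Algorithm~\ref{alg2}) takes essentially the same approach while filling in what the paper leaves implicit.
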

\begin{proof}
See \cite[Appendix A]{aaa}
\end{proof}
The second sub-problem can  be solved using the Lagrangian function and iterative algorithm.
Since, Interference is a function of the power of UEs, to make it tractable, we assume an upper bound $\bar{I}_{u_{(v,i)}}$ for interference. In order to make \eqref{constraints1} as a standard form of a convex optimization problem, it is required to change the variable of equations \eqref{cc14} and \eqref{cc15} ($p_{r_{(s,i)}} = \sigma_{q_{r(s,j)}}^2\times 2^{C_{r_{(s,i)}}}$ and $1/(D_{s}- d_{s_1} + d_{s_2})+\alpha_s$ respectively).
Assume $\boldsymbol{\lambda}$, $\boldsymbol{\mu}$, $\boldsymbol{\xi}$, and $\boldsymbol{ \kappa}$ are the matrix of Lagrangian multipliers that have non-zero positive elements.
The Lagrangian function is written as follow
\begin{subequations}\label{lagrang}
\begin{alignat}{4}
\mathcal{L}(\boldsymbol{P}; \boldsymbol{\lambda}, \boldsymbol{\chi}, \boldsymbol{\mu}, \boldsymbol{ \xi}, \boldsymbol{ \kappa}) & = \sum\limits_{v=1}^{V} \sum\limits_{k=1}^{U_v}\mathcal{\bar{R}}_{u_{(v,k)}}
- \eta \sum\limits_{v=1}^{V} \sum\limits_{i=1}^{\mathcal{R}_s}\bar{p}_{r_{(s,i)}}\\
&+\sum\limits_{v=1}^{V} \sum\limits_{k=1}^{U_v} \lambda_{u_{(v,k)}} (\mathcal{\bar{R}}_{u_{(v,k)}}-\mathcal{R}_{u_{(v,k)}}^{max})\\
&- \sum\limits_{s=1}^{S} \sum\limits_{i=1}^{R_s} \mu_{r_{(s,i)}} (\bar{p}_{r_{(s,i)}}-P_{max})\\
&- \sum\limits_{s=1}^{S} \sum\limits_{i=1}^{R_s} \xi_{r_{(s,i)}} (\bar{p}_{r_{(s,i)}}-\sigma_{q_{r(s,i)}}^2 2^{C_{r_{(s,i)}}^{max}}).\\
&+ \sum\limits_{v=1}^{V} \sum\limits_{k=1}^{U_v} \kappa_{u_{(v,k)}} \sum\limits_{s=1}^{S}(R_{u_{(v,k)}} -\mathfrak{D_s})a_{v,s}.
\end{alignat}
\end{subequations}
where, $\mathfrak{D_s}=\frac{1}{D_{s}^{max}-d_{s_1}-d_{s_2}}+\alpha_s$. Optimal power is obtained from  \eqref{lagrang}
\begin{equation}
p_{u(v,i)}^{*} = [\frac{\mathfrak{y}_{u(v,i)}\mathfrak{w}_{u(v,i)}-\mathfrak{x}_{u(v,i)}\mathfrak{z}_{u(v,i)}}{\mathfrak{x}_{u(v,i)}\mathfrak{w}_{u(v,i)} }]^+
\end{equation}
where, $\mathfrak{y}_{u(v,i)}= (\lambda_{u(v,i)}+\kappa_{u_{(v,k)}}+1)\frac{B}{Ln_2}$ and
$\mathfrak{w}_{u(v,i)} = \sum_{s=1}^{S}|\bold{h}_{R_s,u(v,i)}^H \bold{w}_{R_s,u(v,i)}|^2 a_{v,s}$. Also
$\mathfrak{z}_{u(v,i)} = BN_0 + \bar{I}_{u(v,i)}$ and $\mathfrak{x}_{u(v,i)} = \sum\limits_{s=1}^{S} \sum\limits_{i=1}^{R_s} ( \mu_{r_{u(s,i)}} + \xi_{r_{(s,i)}}+\eta)||w_{r_{(s,j)},u_{(v,i)}}||^2$.
By using sub-gradient method, the optimal power $\boldsymbol{P}$ is obtained \cite{mimoCran}.
%\section*{References}
\subsection{Solving two part of Sub-problem A iteratively}
In \eqref{firstsub} and \eqref{secondsub}, the details of solving each part of the sub-problem are depicted.
Firstly, we obtain $\boldsymbol{A}$ by fixing $\boldsymbol{P} = P_{max}$ in the problem \eqref{constraints1} and using algorithm \eqref{alg}. Also we fixed $\eta = 0$. Afterward, $\boldsymbol{A}$ is achieved from algorithm \ref{alg}. Then, in the second part, we acquire $\boldsymbol{P}$ using the sub-gradient method. After solving $\boldsymbol{P}$, $\eta$ is updated. Then in the next iteration, with new $\boldsymbol{P}$
and $\eta$ the two parts of the problems is solved until the algorithm converges.
Here, the algorithm of solving sub-problem A is shown in algorithm \eqref{alg2}
\begin{algorithm}
\caption{Joint Network Slicing and Power Allocation}\label{alg2}
\begin{algorithmic}[1]
\State Set the maximum number of iterations $I_{max}$, convergence condition $\epsilon_{\eta}$  and the initial value $\eta^{(1)} = 0$
\State Set $\boldsymbol{P} = \boldsymbol{P}_{max}$
\For {$counter \gets 1$ to $I_{max}$}
\State Achieve $\boldsymbol{A}$ by applying Algorithm \eqref{alg}
\State Obtain $\boldsymbol{P}$ by using sub-gradient method which is mentioned in \eqref{secondsub}.
\If {$ \mathfrak{R}_{tot}(\boldsymbol{P}^{(i)},\boldsymbol{A}^{(i)}) - \eta^{(i)} P_{r_{tot}}(\boldsymbol{P}^{(i)},\boldsymbol{A}^{(i)}) < \epsilon_{\eta} $}
\State Set $\boldsymbol{P}^*= \boldsymbol{P}^{(i)} $, $\boldsymbol{A}^*= \boldsymbol{A}^{(i)} $   and  $ \eta^{*} =\eta^{(i)} $;
\State break;
\Else
\State $i= i+1$, Setting $\boldsymbol{P} = \boldsymbol{P}^{(i)}$ ;
\EndIf
\State \textbf{end if}
\EndFor
\State \textbf{end for}
\end{algorithmic}
\end{algorithm}
\subsection{Sub-Problem B}
In this subsection, we would like to solve  \eqref{constraints2}, which is the placement of virtual resources to physical resources in order to minimize the cost function $\psi_{tot}$.
To achieve optimum $\boldsymbol{Y}$ heuristic algorithm is applied. The details of the heuristic algorithm are written in algorithm \eqref{alg3}. In this algorithm, firstly, we sort slices and DCs according to their sum-weighted of their requirements (line \ref{31} and line \ref{32} of algorithm \ref{alg3}).
We define a weighted parameter for $\Omega_{\mathfrak{z},s}^{tot}$ and $\tau_j^\mathfrak{z}$ as follow
\begin{equation}\label{wt}
\begin{split}
\hat{\Omega}_{s}^{tot} &= w_M \bar{\Omega}_{M,s}^{tot} + w_S \bar{\Omega}_{S,s}^{tot} + w_C \bar{\Omega}_{C,s}^{tot} \\
\hat{\tau}_j &= w_M \tau_{{j}}^M + w_S \tau_{{j}}^S + w_C \tau_{{j}}^C,
\end{split}
\end{equation}
where, $\boldsymbol{w} = \{w_M, w_S, w_C\}$ is the weight of memory, storage and CPU.
Secondly, we start mapping from the most needed slices to the DC with the most physical resources (from line \ref{33} to line \ref{34} of algorithm \ref{alg3}). After mapping DCs to slices, If a slice does not admit to a specific DC, it remains for the next placement. In the next placement the remaining slices, map to more than one DC according to their requirements. (from line \ref{35} to line \ref{36} of algorithm \ref{alg3}). At the end, if DC with the lowest physical resources is free and can be served the slices of a DC with the highest physical resource, the slices remapped to new DC with the lowest physical resource since it has the lowest power consumption (line \ref{37} of algorithm \ref{alg3}).
\begin{algorithm}
\caption{Plecement of Physical resources into Virtual resources}\label{alg3}
\begin{algorithmic}[1]
\State Sort Slices according to $\hat{\Omega}_{s}^{tot} , \forall s$ in descending order.\label{31}
\State Sort DCs according to $\hat{\tau}_j , \forall j$ in descending order. \label{32}
\State $\boldsymbol{Y} = \boldsymbol{0}$
\For {$d \gets 1$ to $D_c$}\label{33}
\For {$s \gets 1$ to $S$}
\If {$\sum_{d=1}^{D_c}y_{s,d}==0$ and $\bar{\Omega}_{\mathfrak{z}(s)}^{tot} \leq \tau_{\mathfrak{z}_j} \forall \mathfrak{z}, \forall s$  }
\State Set $y_{s,d} = 1$;
\State {$\tau_j^{\mathfrak{z}}$ $\gets$ {$\tau_j^{\mathfrak{z}} - \bar{\Omega}_{\mathfrak{z},s}^{tot}$}} $\mathfrak{z}\in \{M,S,C\}$
\EndIf
\State \textbf{end if}
\EndFor
\State \textbf{end for}
\EndFor
\State \textbf{end for} \label{34}
\State  $ind_{rem} = \{s|({\sum_{d=1}^{D_c}y_{s,d}==0})\}$ \label{35}
\State Sort remaining amount of DCs same as before in descending order.
\State Sort remaining slices same as before in descending order.
\For {$r \gets 1$ to $S_{rem}$}
\For {$n \gets 1$ to $D_c$}
\State Set $y_{s,d} = 1$;
\State {$\bar{\Omega}_{{\mathfrak{z}},s}^{tot}$ $\gets$ {$\bar{\Omega}_{{\mathfrak{z}},s}^{tot}- \tau_j^{\mathfrak{z}}$}}
\If {$\bar{\Omega}_{s}^{tot}==0$}
\State Set $y_{s,d} = 1$;
\State {$\tau_j^{\mathfrak{z}}$ $\gets$ {$\tau_j^{\mathfrak{z}} - \bar{\Omega}_{\mathfrak{z},s}^{tot}$}} $\mathfrak{z}\in \{M,S,C\}$
\State {break inner loop}
\EndIf
\State \textbf{end if}
\EndFor
\State \textbf{end for}
\EndFor
\State \textbf{end for} \label{36}
\State Remapping DCs must be done to prevent wasting Energy \label{37}
\end{algorithmic}
\end{algorithm}
\section{Numerical Results}\label{simul}
In this section, numerical results for the main problem are depicted.
 \begin{table}
 \caption {Simulation Parameter} \label{table:1a}
 \begin{center}
  \begin{tabular}{||c c ||}
  \hline
Parameter & Value \\ [0.5ex]
  \hline\hline
  Noise power & -174dBm\\
  \hline
  Bandwidth & 120 KHZ \\
  \hline
 Maximum transmit Power of each RU & 40dBm \\
  \hline
  Minimum delay &  300usec \\
  \hline
  Maximum fronthaul capacity  & 200 bits/sec/Hz \\
   \hline
  Minimum data rate &  10 bits/sec/Hz \\ [1ex]
  \hline
 \end{tabular}
 \end{center}
 \end{table}
In Fig. \ref{fig:f1a}, energy efficiency is depicted for two different numbers of services with different mean number of UEs in each service The parameters of simulations are listed in Table \ref{table:1a}. The optimal method is obtained by using MOSEK toolbox to obtain $\boldsymbol{A}$ and CVX toolbox to obtain $\boldsymbol{P}$. They iteratively update $\boldsymbol{A}$, $\eta$ and $\boldsymbol{P}$. The optimal method is $0.1$ bits/J/Hz better than the proposed method for $V = 6$ and $E[U_v] = 10$, and also, $0.09$ bits/J/Hz better than the proposed method for $V = 3$ and, $E[U_v] = 10$. As it is shown, the energy efficiency is increased as mean number of UEs rises.
\begin{figure}%[H]
  \centering
    \includegraphics[width=\linewidth]{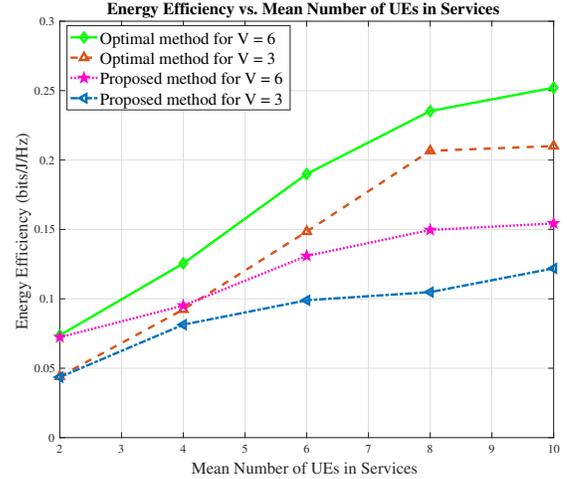}
  \caption{Energy Efficiency vs. Mean Number of UEs in each Service}
  \label{fig:f1a}
\end{figure}
In Fig. \ref{fig:f1}, the ratio of admitted slices is demonstrated for two different numbers of DCs with the different number of slices. The parameters of the simulation are listed in Table \ref{table:1}. We also set $w_C = 320$, $w_S = 100$, $w_M =1$. In this simulation, the second term of \eqref{eqpsi}, is important and the designed parameter $\nu$ is supposed to be high.
Also, it is assumed that just one DC can serve each slice, and each slice is not admitted by more than one DC. 
The proposed method is based on Algorithm \ref{alg3}, and the optimal method is obtained using MOSEK toolbox.
\begin{small}
 \begin{table}
 \caption {Simulation Parameter} \label{table:1}
 \begin{center}
  \begin{tabular}{||c c ||}
  \hline
  Parameter & Value \\ [0.5ex]
  \hline\hline
  Mean of CPU for DCs & 320GHz\\
  \hline
  Mean of Memory for DCs & 1T\\
  \hline
 Mean of Storage for DCs & 100T \\
  \hline
   Mean of CPU for Slices & 32GHz\\
  \hline
  Mean of Memory for Slices & 100G\\
  \hline
 Mean of Storage for Slices & 10T \\ [1ex]
  \hline
 \end{tabular}
 \end{center}
 \end{table}
\end{small}
When we have two DCs, the proposed method and optimal method have approximately the same ratio of admitted slices. But by increasing the number of DCs to five, the performance of the proposed method reduced. Using five DCs, the difference between the
proposed method and the optimal method in the worst case (44 slices) is about $23$ percentage.
\begin{figure}%[H]
  \centering
    \includegraphics[width=\linewidth]{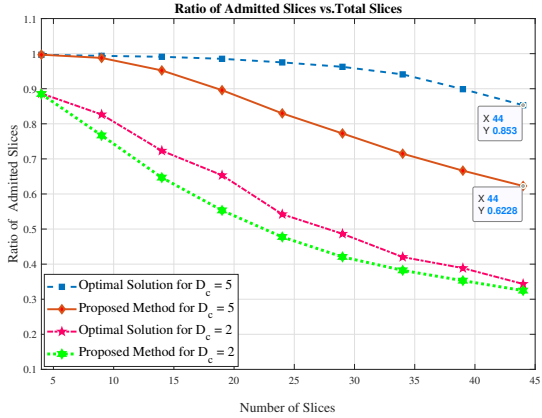}
  \caption{Ratio of Admitted Slices connected to just one DC vs. Total slices}
  \label{fig:f1}
\end{figure}
In Fig. \ref{fig:f2}, the normalized resource consumption is depicted versus the number of slices. The parameters for simulation are listed in Table \ref{table:1}. We also set $w_C = 320$, $w_S = 100$, $w_M =1$). In this simulation, suppose the number of DCs is entirely enough to cover all slices and the designed parameter $\nu$ is assumed to be low, so we focused on the first term of \eqref{eqpsi}. However, if any slice remains, it can be served by more than one DCs. The optimality of the placement of DC resources to slices is measured based on the power consumption of DCs. It is shown the amount of resources of DCs which are not used. For ten slices, the difference between the optimal solution and the proposed solution is about $15$ percent.
\begin{figure}%[H]
  \centering
    \includegraphics[width=\linewidth]{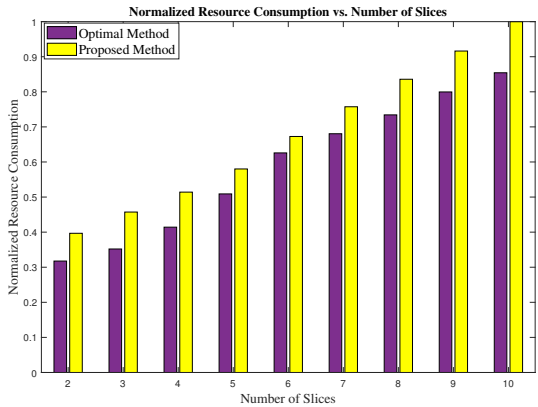}
  \caption{Normalized Resource Consumption vs. Number of Slices}
  \label{fig:f2}
\end{figure}
\section{Conclusion}
In this paper, joint network slicing and power allocation were considered in an ORAN system. It is assumed that UEs are classified based on their requirements into services. Also, there is a number of slices serving the services. Each slice consists of PRBs, RUs, and VNFs that support CU and DU. The limited fronthaul capacity is considered for the fiber links between RUs and DU.
The target was to maximize the sum-rate and minimize the power consumption and energy cost of data centers simultaneously.
The problem is decomposed into two sub-problems. Each sub-problem is solved separately by a heuristic algorithm. Using numerical results, we validate the heuristic method and study the performance of the algorithms. %For the sub-problem A, energy efficiency vs. mean number of UEs in each service is depicted. The energy efficiency  exceeded by increasing the mean number of UEs in each service. For the sub-problem B, two figures are shown. In the first one, the ratio of admitted slices that connect to just one DC, for a different number of slices is denoted. In the second figure, the normalized resource consumption of DCs is depicted. In each figure, the heuristic algorithm is compared with the optimal method and the difference between them is discussed.

\bibliographystyle{IEEEtran}
\bibliography{ref}
\end{document}